\newcommand{\cmark}{\ding{51}}%
\newcommand{\xmark}{\ding{55}}%
\newtheorem{thm}{Theorem}
\DeclareMathAlphabet\mathbfcal{OMS}{cmsy}{b}{n}
\DeclareMathAlphabet\mathbfcal{OMS}{cmsy}{b}{n}
\let\cite\shortcite
\begin{document}


\title{LDA-2IoT : A Level Dependent Authentication using Two Factor for IoT Paradigm}

\author{
\name{Chintan Patel*, Nishant Doshi}
\affil{Department of Computer Science and Engineering,\\ Pandit Deendayal Energy University, Gandhinagar,India}
\email{*chintan.p592@gmail.com}
}

\maketitle
\begin{abstract}
The widespread expansion of the IoT based services are changing people's living habits. With the vast data generation and intelligent decision support system, an IoT is supporting many industries to improve their products and services. The major challenge for IoT developers is to design a secure data transmission system and a trustworthy inter-device and user-device communication system. The data starts its journey from the sensing devices and reaches the user dashboard through different medium. Authentication between two IoT devices provides a reliable and lightweight key generation system. In this paper, we put forward a novel authentication approach for the IoT paradigm. We postulate an ECC based two factor Level-Dependent Authentication for Generic IoT (LDA-2IoT) in which users at a particular level in the hierarchy can access the sensors deployed at below or the equal level of the hierarchy. We impart the security analysis for the proposed LDA-2IoT based on the Dolev-Yao channel and widely accepted random oracle based ROR model. We provide the implementation of the proposed scheme using the MQTT protocol. Finally, we set forth a performance analysis for the proposed LDA-2IoT system by comparing it with the other existing schemes. 
\end{abstract}

\begin{keywords}
IoT; Level Dependent Authentication; Key agreement; RoR
\end{keywords}
\section{INTRODUCTION}
\label{Sec:Introduction}
\noindent Internet of Things (IoT) is a network of an interconnected network of sensing devices, mobile and laptop users, routing devices, servers, and other computing devices with the communicating capabilities. An IoT connects billions of resource-constraint devices with the physical world using a  lightweight communication and security mechanism. An objective of the IoT system is to provide "any service" to "any user" on "any time" at "anywhere". So overall, the IoT is an integration of all the Cyber-Physical Systems (CPS) through the internet.

The generic IoT model discussed in  \cite{patel2018internet} shows an integration of the various IoT components and entities. In general, IoT users connect with the different IoT applications through the internet. The IoT applications like smart home, smart factory, smart transportation, smart agriculture, smart health communicate integrated data with the users through intermediate internet devices such as gateways and switches. 

Most of the devices deployed in IoT networks are sensing devices. The sensing devices are tiny devices capable of detecting the surrounding biological environment as well as the nonbiological environment. The sensing devices communicate data in the short-range area and transmit sensed data to the nearest gateway device through technologies such as Bluetooth, RFID, Zigbee, and WiFi. The sensing devices are resource constraint devices in terms of storage cost, power utilization, and computation capabilities. The traditional security mechanisms of the internet use complex cryptography mathematical operations. These operations require ample storage space and high computation memory. Thus, due to the availability of numerous resource constraint devices, it is indispensable to prototype a  lightweight security mechanism for end-to-end data communication in IoT Model. The proposed lightweight security mechanism must be efficient in terms of the computation capabilities, optimized in terms of memory and time utilization, and robust against the traditional and non-traditional security attacks \cite{Gope2019}.
An Elliptic Curve Cryptography (ECC) attracts security researchers due to its lightweight operations, less computation requirement, and limited memory consumption. The ECC has proven it's computation efficiency and robust security against traditional public-key cryptography mechanisms like RSA. The ECC operations like point multiplication replace the conventional discrete logarithm mechanisms based on exponential computation. Recently authors in \cite{ABBASINEZHADMOOD201847} and \cite{Roy2018} proposed a session key agreement scheme using the ECC for a sensing environment.    

The sensing devices deployed on "ground" collect data from the environment and transmit those data to the nearby home agent (gateway). The neighboring home agents can be a micro-controller, micro-processor, mobile towers, routers, or any data receiver device which integrates data from the sensing devices and forwards those data to the users via other internet devices. The recent study shows that the latest home agents also work like fog devices or edge computing devices capable of performing the local data processing and converting those unorganized data into organized raw data. In traditional IoT network, deployed sensing devices create a local cluster and communicate pieces of information with the cluster heads (CHs) thorough the short-range protocols like Zigbee, Z-Wave, Beacon or Bluetooth Low Energy (BLE). In some of the IoT model, the sensing devices communicate with the cluster head through a wired medium. The CH connects the gateway devices (GW) with the sensing devices. The gateway devices are resource capable devices that can perform complex security operations and can forward the received data to the IoT application users through a long-range internet protocol like IP or 6LoWPAN. In IoT, users can access stored data as well as realtime live data. Thus, the gateway devices transmit data to the cloud server for storage and processing or to the user for realtime monitoring. Secure storage and processing of the data in the cloud lead toward intelligent decision making through machine learning.
In numerous recently proposed key agreement schemes, the application users register with the gateway devices for each sensing device \cite{ABBASINEZHADMOOD201847, Roy2018, Zhou2019}, but let us take the realtime scenario in which there are thousands of sensing devices deployed on the ground. Users like the company owner or the cluster manager want to receive the data from each sensing device in realtime. Then, they need to register for each sensing device individually, which is not a practical and feasible solution. During deployment of the realtime scenario, we found that in the recently proposed schemes, the gateway needs to create a separate smart card for each of the sensing devices for each of the application users who require $N_u * N_{sd}$ registrations and $N_u * N_{sd}$ time gateway initial computations. Here $N_u$ shows the number of users, and $N_{sd}$ indicates the number of sensing devices. Thus, in this paper, we reap a novel solution for the problems mentioned earlier, using a bizarre concept of the Level-dependent Authentication (LDA).       
\subsection{Related Work}
\label{Subsec:RelatedWork}
\noindent In IoT security, authentication is one of the significant operations which assures mutual trust between the sensing devices, intermediate network devices, and the end-user devices. The secure authentication and key agreement achieve features like \textit{an anonymity and unlinkability, lower setup complexity, fewer computations and communication cost, natural access control, limited energy consumption, service availability, data confidentiality, communication integrity, nontraceability, secure ownership transfer, and mutual authentication}. In this subsection, we discuss recently published authentication schemes designed using ECC \cite{miller1985use} for the generic IoT model and other IoT applications like smart grid and smart home. 

\subsubsection{Authentication in Smart Home:} Recently, Shuai et al. published an authentication scheme for the smart home using an ECC \cite{SHUAI2019SmartHomeEcc}. In this Paper, the Registration Authority (RA) is a trusted entity that performs an initialization step and generates secret credentials for the sensing device $SD_j$ and the gateway node $GW$. The scheme proposed in  \cite{SHUAI2019SmartHomeEcc} is a two-factor authentication scheme in which the user makes use of the password and smart card to perform the login and authentication. The other authentication scheme was recently proposed by Lyu et al. \cite{Lyu2019SmartHomeECC} for the intelligent home using ECC. 

Authors in \cite{Lyu2019SmartHomeECC} put forward an authentication scheme which provides security against the traceability and useful for the uncertain internet services and environment like "If This Than That (IFTTT)." In the same paper, authors give a formal security analysis using a practical scyther tool. In 2018, Chifor et al.  \cite{chifor2018securitysmarthomeecccoap} proposed a unique authentication scheme for the "Fast IDntity Online (FIDO) model. In the FIDO model, the user does not use any authentication factors like a password. Still, it uses ECC generated parameters as keys stored by the trusted party and biometric-based access for those keys. The other authentication protocol for the smart home using a password was proposed by Naoui et al. in \cite{Naoui2019smarthomeecc}. Authors in \cite{Naoui2019smarthomeecc} proposed a  lightweight and secure password-based authentication scheme called "LSP-SHAP" for the smart home monitoring and management. 

\subsubsection{Authentication in Smart Grid:} In 2016, Jo et al. proposed an authentication mechanism for the smart grid using ECC. Jo et al. \cite{Jo2016smartgridecc} proposed an authentication scheme between a smart meter (SM), Data Collection Unit (DCU), Advanced Metering Infrastructure (AMI) using ECC based key pair generation. In 2017, Vahedi et al. proposed an ECC based authentication scheme for the grid in \cite{vahedi2017securesmartgridecc}. Authors in \cite{vahedi2017securesmartgridecc} proposed an authentication mechanism between a smart meter (which collects an energy consumption from smart appliances), a gateway (which aggregates the data from all smart meter) and operation center (which works as a bill generating location) using a Trusted Third Party (TTP). 

In 2018, Mahmood et al. \cite{MAHMOOD2018557} proposed an authentication scheme for the smart grid in which other registered users and TTP authenticate with the registered user.  The authors in \cite{MAHMOOD2018557} provided a security analysis for the proposed authentication using a widely adopted and practical security analysis tool "ProVerif." In 2019, Kumar et al. \cite{Kumar2019SmartGridECC} proposed an ECC based authentication scheme for the smart grid. The network model used by \cite{Kumar2019SmartGridECC} consists of the authentication between the Energy Utility Center (EUC) and the Smart Grid device (SG) using the Trusted Authority (TA). The authors performed a security analysis for their proposed scheme using Automated Verification of Internet Security Protocol and Application (AVISPA) tool and random oracle based RoR Model. Recently, other authors also presented authentication schemes using ECC for the smart grid in \cite{ZHANG2019smartgridecc, khan2019secure}.

\subsubsection{Authentication in Smart Healthcare:} In 2018, Jia et al. \cite{Jia2018health} proposed a key agreement scheme for smart health-care using an ECC. The authors in \cite{Jia2018health} provides an authentication mechanism for the fog based cloud service dependent network model in which the critical agreement materializes between a user-fog node (FN)-cloud service provider (CSP). In 2018, Wu et al. \cite{wu2018health} proposed an authentication scheme for the health-care model where the patient with the mobile device communicates with the nearby home agent to transmit the body data to the doctors. 

The authentication model, followed in \cite{wu2018health}, provides a critical agreement between the mobile user-foreign agent-home agent. The AVISPA tool is used for security analysis. Recently in 2019, Ever at al. \cite{ever2019health} proposed an anonymous authentication scheme for the Wireless Medical Sensor Network (WMSN) where the WMSN user receives a live and stored sensor data through the gateway. The authors in  \cite{ever2019health} provides a formal security analysis of the proposed scheme using a random oracle based model. In 2019, Sureshkumar et al. \cite{suresh2019health} proposed an authentication scheme using an ECC for the WMSN in which a sensor-equipped patient with the smart device transmits data to the user (doctor) through the gateway device and also store the data in the cloud. The Authors in \cite{suresh2019health} implemented the proposed scheme using a Linear Feedback Shift Register (LFSR). In the IoT based smart health-care system, the privacy of the patient's identity and confidentiality of the health data is the critical security aspects.  

\subsubsection{Authentication in Generic IoT:} In 2018, Wazid et al. \cite{wazid2018} proposed a User Authenticated Key Management Protocol (UAKMP) for the smart home IoT network. Authors in \cite{wazid2018} followed the user-gateway-sensor based network model and the random oracle based ROR model for the formal security analysis. In 2019, Das et al. \cite{Das2019} proposed a lightweight access control and key agreement protocol for the IoT environment (LACKA-IoT) using ECC. Recently, in 2019, Gope et al. \cite{Gope2019} proposed a privacy-preserving authentication scheme for the IoT devices using a Physical unclonable Function (PUF). The PUF provides a lightweight hardware implementation of the random number generator. In \cite{Das2019}, the ROR based security model was followed, and the security simulation is produced using the AVISPA tool. They simulated the proposed protocol using a widely used simulator Network Simulator 2 (NS2). 

A publish-subscribe based MQTT protocol is widely accepted for IoT based applications. The Lohachab et al. \cite{Lohachab2019}, in 2019, proposed an ECC based authentication and access control scheme for the MQTT based communications. The Machine to Machine communication through the MQTT protocol plays a significant role in automated service developments. The authors in \cite{Esfahani2019} proposed an authentication scheme between the sensing device and routing device (device-device) using lightweight operations like hash function and XOR operation. The Internet of Drones (IoD) is a network of uncrewed areal vehicles called Drones. In IoD based interface, the secure live streaming and reliable access control of the drone devices are essential security aspects. The authors in \cite{Wazid2019drone} set forth a crucial lightweight agreement scheme for the drone deployment in which the ground user securely communicates with the Drone Data Transmitter (DDT) through the server as a trusted entity. 

Due to the widespread growth of the IoT based devices and their deployments, the attack space is also expanded as well as new attack vectors are also created for the attackers. The layered wise security analysis of the IoT is briefly discussed in \cite{Mosenia2017}. The physical attack, side-channel, and DoS attack are major attacks through which IoT nodes pass through. The side-channel attack faced by IoT devices, user devices, or network devices, which can leak certain critical information such as communication time, communication frequency, communication direction, communication modulation to the attacker. The significant attacks through which the IoT authentication schemes pass through are \textit{replay attack, Sybil attack, flooding attack, DoS attack, stolen smart card attack, forward secrecy, password guessing attack, eavesdropping, forged user attack,  masquerading server attack, man-in-the-middle attack, stolen verifier attack, session-specific secret loss attack, device compromised and device impersonation attack, insider attack and so on.} 

\subsection{Research Contribution}
\label{Subsec:Contribution}
\noindent There are multitudinous research contributions from this paper. 
\begin{itemize}
    \item In this paper, we provide a solution for the problem of "multiple registrations by the single user for the different sensing devices" using a Level-Dependent Authentication (LDA). The LDA algorithm is straightforward but highly efficient to use in the massive IoT industrial deployment. The LDA protocol significantly reduces the access control complexity for large industries. To the best of our knowledge, the LDA is a novel and unique concept that is proposed for the first time in this paper.
    \item The proposed LDA scheme uses only hash-functions ECC based computations and ECC encryption/decryption function, which make the proposed scheme a  lightweight compared to the other existing schemes. 
    \item The security analysis of the proposed LDA scheme is performed in two-fold. One fold is through the informal way; we prove that the proposed LDA scheme is secured against various traditional and non-traditional IoT attacks over the Dolev-Yao channel. In the second fold, we prove the security of the proposed scheme using a widely accepted and recognized random oracle based security model "Real or Random (RoR) model." Along with the RoR, we also provide security simulation for the proposed scheme using an AVISPA tool. The security standardization authorities widely recognize the AVISPA tool like IETF and others for their security simulations.
    \item The comparative analysis of the proposed scheme with the other existing systems proves that the proposed LDA scheme is more reliable, efficient, and convenient in-terms of computation cost, communication cost, energy consumption, and deployment in a massive industry. The implementation validity of the proposed scheme is verified through the deployment of NodeMCU as sensing devices, raspberry-pis as gateway devices and laptops as a user device. The proposed LDA-2IoT scheme is performed using the MQTT as an application layer protocol and 6LoWPAN as a network layer protocol. 
\end{itemize}
\subsection{Paper Organization}
\label{Subsec:Organization}
\noindent The rest of the paper is organized as follows: In section \ref{Sec: Preliminaries}, we provide a brief introduction about the network model followed for designing of the proposed scheme. We introduce readers with the basic preliminaries like the one-way hash function, ECC, level-dependent authentication, and threat model. Section \ref{Sec:ProposedScheme} presents a proposed LDA-2IoT scheme using an ECC. In section \ref{Sec:SecurityAnalysis}, we provide a detailed security analysis for the proposed scheme using the formal model and the informal method. The performance analysis for the proposed scheme and its comparison with the other existing schemes is discussed in section \ref{Sec:performanceanalysis}. In section \ref{Sec:Implementation}, we provide the implementation methodology and the output for a proposed scheme. Lastly, we conclude the article with the future directions in section \ref{Sec:Conclusion}.   

\section{PRELIMINARIES AND THREAT MODEL}
\noindent In this Section, we put forward the essential preliminaries and threat model used for the designing of the proposed protocol.
\label{Sec: Preliminaries}
\subsection{System Model}
\label{Subsec:NetworkModel}
\noindent The generic IoT system is a network of heterogeneous tiny resource constraint devices. In generic IoT, the end-user wants a data sensed by the resource constraint sensing devices. Therefore, in general, any IoT system provides two types of data services. The first type of facility where the user wants quick realtime live data. Examples of this type of application are smart disaster management, intelligent home/industry monitoring, smart energy monitoring, and so on. The second type of service where the user does not want realtime live data, but he/she retrieves stored data for analysis and smart decision making. Examples of this type of application include smart automated decision-making, intelligent recommender systems, intelligent learning-based security mechanisms, and so on. Thus, we tried to design and implement a proposed authentication scheme in such a way that it can be used for both the scenario. We discuss the system model in two ways. One is a network model where we highlight the topology on network and second as a communication model where we present the discussion about network layer and application layer protocols. 
\subsubsection{Network Model}
\noindent As shown in Fig. \ref{fig:3}, the network model involves four basic entities, the sensing device ($SD$), the cluster head ($CH$), the user devices ($U$) and the gateway node ($GWN$).  
\begin{itemize}
    \item \textit{User Device ($U$):} These devices have the end-user application installed, which provides a dashboard to the user for monitoring and controlling of the deployed system. The user device must be able to capture live data as well as stored data. The user device is built-in with limited resources and can still perform the basic cryptography operations. The smart card reader is a software program that is installed with the user devices that also communicates with the gateway device during authentication.
    \item \textit{Gateway Node ($GWN$):} The gateway node is a trusted party for the proposed IoT topology. The gateway node is a resource capable device and can perform complex cryptographic operations. The gateway device receives data from the sensing devices and forwards those data to the user devices after completing the user verification and data validation. Thus, the gateway device works as an aggregator, as well as a forwarder of the data. In the fog computing/edge computing concept, the gateway devices perform as a critical service provider. In the proposed scheme, we focus on the uni gateway model. In the subsequent future work, we plan to implement a multi-gateway model with a more realistic approach to complete IoT deployment. 
\begin{figure}[H]
    \centering
    \includegraphics[width=2.5in]{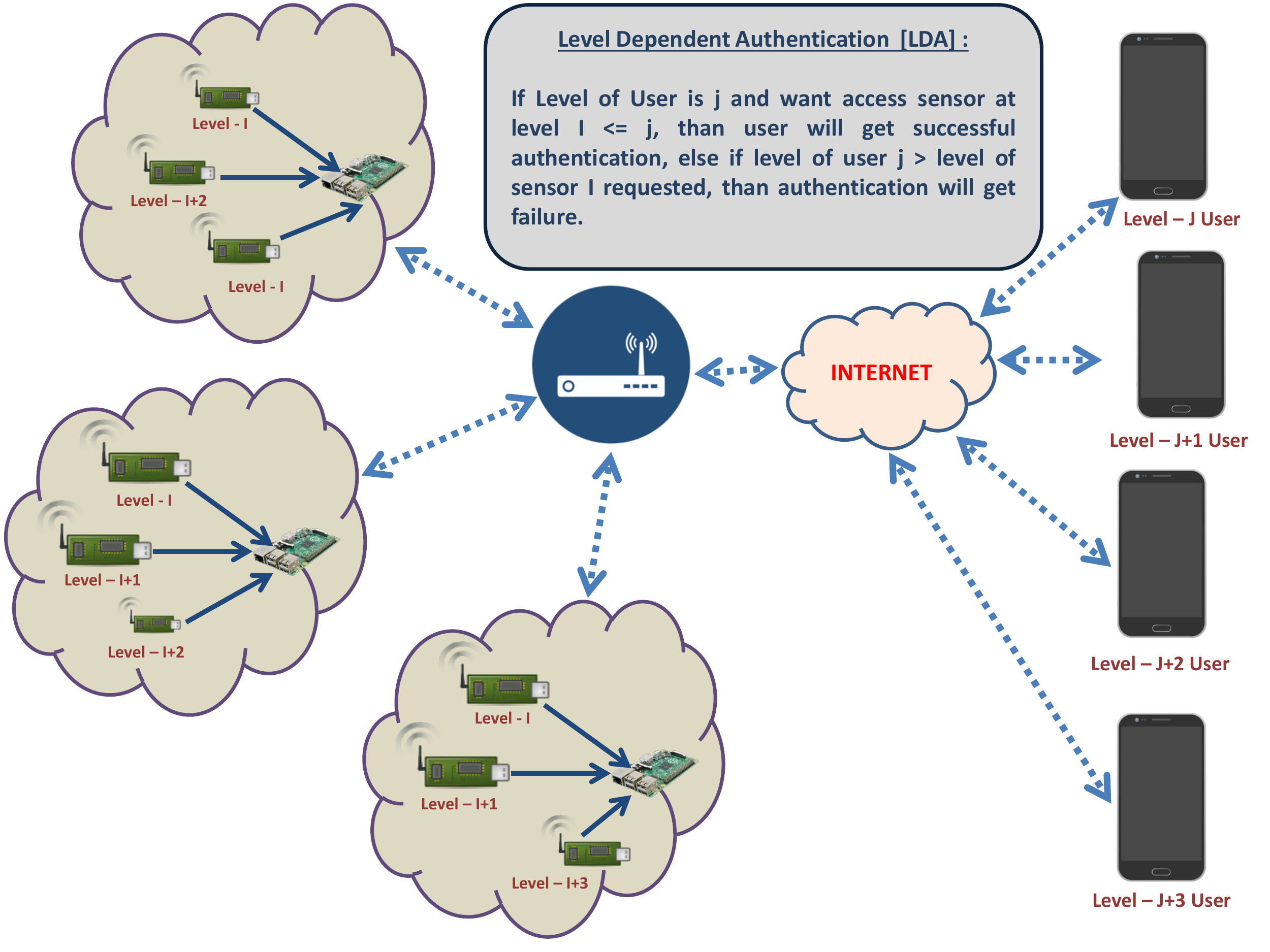}
    \caption{Network Model and Level Dependent Authentication}
    \label{fig:3}
\end{figure}
    \item \textit{Sensing Device ($SD$):} The sensing devices are tiny, and most resource constraint devices in the IoT hierarchy. These devices are highly confined in terms of memory, transmission bandwidth, communication range, computational capability, and power capability. Thus, the security operations performed by these devices must be lightweight and efficient enough. 
    \item \textit{Cluster Heads ($CH$):} The cluster head is a device that works as a networking hub or a switch that receives the data from multiple sensing devices deployed in-the network and forwards those data to the nearby in range gateway device. 
\end{itemize}
In the proposed network model, we assume that the gateway device is the most trusted, highly computationally capable, and physically secured machine \cite{Kumar2015home} \cite{Wazid2018secure}. Thus, the essential responsibilities like initialization, key generation, level verification are taken care of by the gateway devices. 
\subsubsection{Communication Model} In any IoT network, communication is performed in two ways: long-range communication and short-range communication. The short-range communication between the sensing device-cluster head or sensing device-gateway node uses protocols like Z-Wave, Zigbee, Beacons, Bluetooth Low Energy (BLE). The long-range communication between two gateway devices, CH device-gateway device, or gateway device-user device, uses networking protocols like IPv4, IPv6, 6LoWPAN, etc. The primary application layer protocols used by the IoT network are Message Queuing Telemetry Transport (MQTT), Light HTTP, Constrained Application Protocol (COAP), and Extensible Messaging and Presence Protocol (XMPP). 
\subsection{Level-Dependent Authentication}
\label{Subsec:LDA}
\noindent To achieve successful access control in the IoT scenario is also one of the principal challenges. Essential factors that affect the liberty of access control is the availability of the tonnes of sensing devices and heterogeneity of their technical capabilities. Thus, for every user, if we maintain the access control list or perform the registration, then it will be an unvaried task. As per the current literature, the user registers for individual sensors. If the user is eligible for hundreds of sensing devices, then he/she needs to maintain hundreds of smart cards. This is a significant challenge to reduce the space requirement with less complexity and wipe out the user's multiple registrations.  

Rather than the traditional approach, we highlight a novel concept of the Level-Dependent Authentication (LDA) to tackle the above said challenges using a less computation cost, low energy consumption, fewer operations, and little memory requirement. The working of the LDA concept is highlighted in Fig. \ref{fig:3}. The following algorithm provides a working mechanism of the LDA concept. 

\begin{algorithm}
\SetAlgoLined
\KwResult{Access of Sensing Device to User}
 \textit{User-level} = i\;
 \textit{Sensor-level} = j\;
 \While{Gateway received request from user}{
  \eIf{j $\leq$ i}{
   \textit{Access-Allowed}\;
   }{
   \textit{Access-Not-Allowed}\;
  }
 }
 \caption{Level Dependent Authentication}
\end{algorithm}

Therefore, the major advantages of using LDA can be listed as follows:
\begin{itemize}
    \item Reduction in the access control complexity.
    \item Number of registration phases and initialize phases will be reduced to the number of users rather than the number of sensing devices.
    \item Smooth replacement of the user device, sensing device, and gateway device compare to the existing traditional approach.
    \item Reduction of the computation cost, energy consumption, and memory utilization at user devices, gateway devices, and sensing devices.
\end{itemize}

The major challenges and future research directions related to LDA concept can be listed as follows:
\begin{itemize}
    \item Little increase in the computation cost at the gateway device.
    \item In the proposed scheme, the decision for users and sensors' level will be taken by gateway. The proposed scheme can get further extension where separate authority like registration authority can be created for the purpose of initial parameter computation as well as level decision. 
\end{itemize}
We validated the LDA algorithm for the MQTT based communication environment with user devices, gateway devices, and the sensing devices. We created a secure channel through the MQTT over TLS. We deployed numerous sensing devices on the university campus at different locations like computer labs, faculty cubicles, canteens, and admin offices (total fifty sensing devices). We considered the user's level, such as level 1 for the director, level 2 for the deans, level 3 for the faculties, level 4 for the admin staff, and level 5 for the clerical staff. As a gateway device, we used Raspberry pi 3 Model B with 1 GB RAM. The proposed scheme is verified, and we firmly realized that the proposed LDA optimizes the authentication process by storage, memory, and computation significantly.

\subsection{One-way Hash Function}
\label{Subsec:HashFunction}
\noindent The one-way hash function is a very essential and useful cryptographic operation that takes an arbitrary length of input and produces a fixed-length output. Thus, we can define the hash function H:$\{0,1\}^*$ $\Rightarrow$ $\{0,1\}^n$. Any hash function is deterministic, and the output of the hash function is also called a message digest or hash output. So for any binary message m $\in$ $\{0,1\}^*$, the message digest n $\in$ $\{0,1\}^n$ can be computed as n = H(m). 

As defined in \cite{wazid2017secure-smarthome, wazid2018}, for any adversary $\mathcal{A}$, \textit{$Adv_A^{Hash}$(et): Prob[Rand(a,b), $\mathcal{A}$: $p \neq q$ and H(p) = H(q)]} denotes advantage of $\mathcal{A}$ in finding a hash collision where, Prob[$\chi$] highlights a probability for the event $\chi$ and Rand(a,b) denotes that a and b are randomly generated by an adversary $\mathcal{A}$ and \textit{et} is the execution time. The adversary $\mathcal{A}$ is a probabilistic adversary whose advantage is decided by the random choices done by the adversary $\mathcal{A}$ within an execution time \textit{et}. Thus, if A's running time is at most \textit{et} then ($\rho$, t) shows $\mathcal{A}$ attacking the collision resistance of H(.). Thus, within maximum run-time \textit{et}, $Adv_A^{Hash}(et) \leq \rho$.

\subsection{Elliptic Curve Cryptography}
\label{Subsec:ECC}
\noindent An Elliptic Curve Cryptography (ECC) is an example of the lightweight public-key cryptography. The comparative analysis of the ECC and other public-key cryptography protocols like RSA is briefly discussed in \cite{patel2018internet} by Patel et al. The 160 bit of the ECC key provides a security equivalent to 1024 bit of RSA key. An Elliptic curve is a cubical curve with the non-repeatable roots defined over a finite field \textit{F(p)} where \textit{p} is a prime number greater than three. A curve is represented as a \textit{(x,y) $\in$ F(p)*F(p)} for the equation,

\begin{equation}
    y^2 = x^3+ax+b mod p
\end{equation}

where \textit{$4a^3+27b^2 \neq$ 0 mod p}. If \textit{$G_p$($G_x$,$G_y$)} is a base point or generator point than any point \textit{($x_p$,$y_p$)} generated by the \textit{$G_p$} will be on curve along with the point of infinity "$\mathcal{O}$".

For any two curve points, \textit{P($X_p$,$Y_p$)} and \textit{Q($X_q$,$Y_q$)} the sum \textit{R = P + Q} which is \textit{R($X_r$,$Y_r$)} where \textit{$P \neq -Q$}. For any \textit{$\Lambda$ = ($\frac{Y_q-Y_p}{X_q-X_p}$) mod p} if \textit{P $\neq$ Q} and \textit{$\Lambda$ = ($\frac{3X_p^2 + a}{2Y_p}$) mod p} if \textit{P = Q}.  \textit{$X_r$ = ($\Lambda^2$ - $X_p$ - $X_q$) mod p} and \textit{$Y_r$ = ($\Lambda(X_p - X_r)-Y_p$) mod p}. 
\noindent \textit{\textbf{ECC Encryption:}} The ECC encryption invokes an encoding for the message \textit{$m$} in to the curve point \textit{$P_m$}. For any random private key \textit{$K_x$} generated by the user \textit{$U_x$},  the relative public key \textit{$KP_x$ =  $K_x$ * $G_p$}, where \textit{$G_p$} is any group point on the elliptic curve. To encrypt the \textit{$P_m$}, user \textit{$U_x$} selects the random number \textit{k} and computes \textit{$C_m$ = {(k*$G_p$, $P_m$ + k * $KP_y$)}} where \textit{$KP_y$} is a public key of the receiver \textit{$U_y$}. User \textit{$U_x$} sends \textit{$C_m$} to \textit{$U_y$} over a public channel.   

\noindent \textit{\textbf{ECC Decryption:}} The ECC decryption invokes a computation for the \textit{$P_m$ =  $P_m$ + k * ($K_y$ * $G_p$) - ($K_y$*(k*$G_p$))} where \textit{$K_y$} is Y's private key. The advantage of adversary \textit{$\mathcal{A}$} in computing \textit{k} from the \textit{k*$G_p$} can be defined as,  \textit{$Adv_A^{Dec}$(et) = Pr[Rand(k, $G_p$), $\mathcal{A}$: $\chi_x$]} where \textit{$Adv_A^{Dec}$(et) $\leq$ $\rho$}, for any \textit{$\rho$ $\textgreater$ 0} and randomly generated pair \textit{(k, $G_p$)} with execution time \textit{et} in such a way that \textit{$\chi_x$ = k*$G_p$}.

\subsection{Threat Model}
\label{Subsec:ThreatModel}
\noindent The threat model used in this paper is contemplated from the homogeneous model discussed in \cite{Dolev1981}. An adversary $\mathcal{A}$ is an eavesdropper who controls the complete public communication channel. In the IoT based network model, it is possible to define an adversary $\mathcal{A}$ with the robust capabilities for improvement in the designing of the reliable protocol and also to perform the better security validation for the proposed authentication concept. We follow the following adversarial model in this paper:  
\begin{enumerate}[$G_1$.]
    \item An Adversary $\mathcal{A}$ can compute valid pair of the $identity * password$ offline in polynomial time using dictionary \cite{wazid2018}, \cite{Gope2019}.
    \item An Adversary $\mathcal{A}$ can extract the data from the user's smart card after receiving smart card in either ways \cite{Gope2019}, \cite{SHUAI2019SmartHomeEcc}. 
    \item An Adversary $\mathcal{A}$ have full access on the communication channel between a User - Gateway, Sensor node - Gateway, and User - Sensor node \cite{wazid2018}, \cite{Gope2019}, \cite{SHUAI2019SmartHomeEcc}.
    \item An Adversary $\mathcal{A}$ can get the previously computed session key between the user and sensor. $\mathcal{A}$ can use this key to compute the next session key \cite{Kumar2015home}. 
    \item An Adversary $\mathcal{A}$ can have the level information of the user device or the sensing device at a time but can't have the level of both at a time \cite{vahedi2017securesmartgridecc}, \cite{SHUAI2019SmartHomeEcc}. 
    \item An Adversary $\mathcal{A}$ can have the secrets of a gateway node during the system failure situations. $\mathcal{A}$ can use this old secrets to break the newly established system after failure.
    \item An Adversary $A$ can perform the physical attacks on sensor nodes and can retrieve the information stored into it \cite{Kumar2019SmartGridECC}, \cite{Kumar2015home}.
    \item An Adversary $\mathcal{A}$ can generate bot nodes and can send the simultaneous ping messages to the sensor node with the aim to perform DoS attacks \cite{Kumar2015home}. 
\end{enumerate}
\section{LEVEL-DEPENDENT AUTHENTICATION FOR GENERIC IoT (LDA-2IoT)}
\label{Sec:ProposedScheme}
\noindent In this section, we put forward the proposed Level-Dependent Authentication Scheme for Generic IoT (LDA-2IoT). We offer an LDA-2IoT between the user device and the sensing device through the intermediary gateway node. As earlier said, we consider the gateway device as a trusted and secure node. We assume that the universal clock for all the devices in the system is synchronized. The proposed vital agreement scheme consists of three phases: system initialization phase, user registration phase, and login and key-agreement phase. We consider the gateway as a master device, and the key-agreement is also going to carry through the gateway device. The necessary notations used for designing of the proposed scheme are highlighted in Table \ref{tab:notation}.  
\begin{table}[H]
    \centering
    \caption{Symbols and Notations}
    \begin{tabular}{p{8em}p{16em}} \hline
    Symbols & Description \\ \hline
    $R_x$ & Random Number \\
    $T_x$ & Time-stamp \\ 
    $l_i$ & User Level \\ 
    $l_j$ & Sensor Level \\ 
    $U_i$ & User Device \\
    $S_j$ & Sensor Device \\
    $GW$ & Gateway Node \\
    $SID_j$ & Sensor Identity \\
    $UID_i$ & User Identity \\
    $GWID$ & Gateway Identity \\
    $G_p$ & Elliptic Curve Generator \\ 
    $\Delta T$ & Time-stamp Threshold \\ 
    $K_s$ & Gateway Node Master Secret \\ 
    H(.) & One-way Hash Function \\
    Enc(.)/Dec(.) & ECC Encryption/Decryption \\
    $\bigoplus$, $||$ & XOR and Concatenation Respectively \\ \hline 
    \end{tabular}
    \label{tab:notation}
\end{table}
\subsection{System Initialize Phase}
\label{Subsec:IntializePhase}
\noindent  In this subsection, we discuss the system's initialization phase. All steps in the initialization phase of the system are carried out by the gateway node in an offline manner. Thus, message generation and message communication in this phase occur in a secure environment. The gateway device computes parameters for the user devices and sensing devices. The gateway device decides level for the user device based on the position of the user in an organizational hierarchy and the level of sensing device based on its location of deployment in the environment. It is necessary to observe that none of the devices store their levels in any format.
\subsubsection{Gateway Initialize Phase}
The gateway initialize phase occurs as follow,
\begin{itemize}
    \item Generates random private key $RGWN_k$ from the range of 1 to n where n is the large prime order of the elliptic curve. 
    \item Generates a gateway random master key $K_s$. 
    \item Computes gateway node public key as a $PUB_{GW_k}$ = $RGWN_k$ * P, where P is the curve point.
\end{itemize}
\subsubsection{User Device Initialize Phase}
The user device initialize phase occurs as follow,
\begin{itemize}
    \item Generates a random private key for each $i^th$ user as $RU_i$ from the range of 1 to n where n is a large prime order of the curve, and i ranges from 1 to the number of users in the IoT network. 
    \item Computes public key for the user $U_i$ as a $PUB_{U_i}$ = $RU_i$ * P, where P is curve point.
    \item Generates random identity for each user $U_i$ as $UID_i$.
    \item Computes $X_1$ = H($RU_i || UID_i || K_s$).
    \item Computes $X_2$ = H($UID_i|| PUB_{U_i}||K_s$).
    \item Stores $X_1$, $X_2$, $RU_i$ in the secret memory of the user $U_i$. 
\end{itemize}
\subsubsection{Sensor Device Initialize Phase}
The sensor device initialize phase occurs as follow,
\begin{itemize}
    \item Generates a random number as a private key for each sensor node $S_j$ called as a $RSN_j$. 
    \item Computes public key for the sensing device $S_j$ as a $PUB_{S_j}$ = $RSN_j$ * P, where P is the curve point.
    \item Generates random identity for each sensor node $S_j$ as a $SID_j$.
    \item Computes $Y_1$ = H($RSN_j || SID_j || K_s$).
    \item Computes $Y_2$ = H($SID_j || PUB_{S_j} || K_s$).
    \item Computes $D_j$ = H($l_j || K_s || SID_j)$ where \textbf{$l_j$} is the level of jth sensor based on its deployment in network.
    \item Stores $Y_1$, $Y_2$, $RSN_j$, $D_j$ in the secret memory of the sensing device $S_j$. 
\end{itemize}
Gateway node fly parameters $PUB_{GW_k}$, $PUB_{U_i}$, $PUB_{S_j}$ as a public parameters. We like to point out that during the implementation process of the proposed LDA-2IoT, we stored all these parameters in all the devices as a publicizing process. 
\subsection{User Registration Phase}
\label{Subsec:UserRegPhase}
\noindent In this section, we discuss the user registration process carried out in a secured manner between user device and the gateway device. The user registration phase follows following steps:
\begin{enumerate}
    \item \textbf{\textit{$U_i$ $\xrightarrow{Request}$ $GW$:}} The user $U_i$ selects the password $UPW_i$, generates the random numbers $R_a$, $R_b$, computes the $TPW_i$ = H($UPW_i||R_a$) $\bigoplus$ $R_b$ and sends \textit{Request} = \{$UID_i$\} to the gateway $GW$.
    \item \textbf{\textit{$GW$ $\xrightarrow{Smart Card}$ $U_i$:}} The gateway computes, $Reg_i$ = H($UID_i||K_s$), computes $B_i$ = H($l_i || K_s || UID_i)$ where \textbf{$l_i$} is the level of ith user based on its role in the organization and $K_s$ is the gateway master secret. Generate smart card \textit{SC} = \{$Reg_i$, $B_i$, H(.), $E_p(a,b)$\} and sends to the user $U_i$.
    \item The user computes $L_1$ = $H(UID_i)$ $\bigoplus$ $R_a$, $TPW_i'$ = $TPW_i$ $\bigoplus$ $R_b$, $L_2$ = H($UID_i||TPW_i'$), $Reg_i*$= $Reg_i$ $\bigoplus$ $R_b$, replaces $Reg_i$ by $Reg_i*$ in SC and creates final SC = \{$Reg_i*$, $L_1$, $L_2$ ,$B_i$, H(.), $E_p(a,b)$\} 
\end{enumerate}
\subsection{Login and Session Key Agreement Phase}
\label{Subsec:Loginkeyphase}
\noindent In this subsection, we discuss two phases, the login phase and the session key agreement phase in which the user device $U_i$ wants to access the data from the sensing device $S_j$, and for that, it tries to establish a session key with the $S_j$. In the login phase, the user provides $UID_i$, $UPW_i$ and SC to the Smart Card Reader (SCR), the SCR verifies all the parameter and computes new parameters for the key agreement phase. All the steps of the session key agreement phase perform through the public channel. In this phase, $U_i$ sends a request to the $GW$. The $GW$ verifies level and other parameters of the $U_i$ and prove it's access capabilities. Later on, through the $GW$ device, $U_i$ and $S_j$ generates a mutually authenticated session key $SK$. The login phase and the session key agreement consist of the following steps:
\begin{enumerate}
    \item \textbf{\textit{$U_i$ $\xrightarrow{Request}$ $SCR$:}} The user provides $UID_i$ and $UPW_i$ and SC to the SCR. The SCR computes $R_a*$ = $L_1$ $\bigoplus$ H($UID_i*$), $TPW_i*$ = H($UPW_i||R_a*$), $L_2*$ = H($UDI_i||TPW_i*$) and verifies $L_2*$ = $L_2$.  If verification gets success, SCR allows $U_i$ for the further key agreement else abort the procedure. 
    \item \textbf{\textit{$U_i$ $\xrightarrow{Message 1}$ $GW$:}} The user device $U_i$ gets current time-stamp $T_1$, random $r_t$ and computes $M_1$ = $Enc_{PUB_{GW_k}}$ ($Temp_0$, $Pub_{U_i}$,$PUB_{S_j}$,$r_t$,$B_i$), $Temp_0$ = H($X_2||T_1||r_t$). The $U_i$ sends \textit{Message 1} =  \{$M_1$,$Temp_0$ $T_1$\} to $GW$. 
    \item \textbf{\textit{$GW$ $\xrightarrow{Message 2}$ $S_j$:}} The gateway device $GW$ gets current time-stamp $T_1*$ and verifies $\Delta$T $\leq$ $T_1*$ - $T_1$. Gets \{H($X_2||r_t||T_1$), $r_t$, $Pub_{U_i}$,$PUB_{S_j}$,$B_i$\} = $Dec_{RGWN_k}$($M_1$), extracts valid $UID_i*$ for $Pub_{U_i}$ from it's secret memory  and verifies H($H(PUB_{U_i}||UID_i*||K_s)||T_1||r_t$)$\stackrel{?}{=}$ $Temp_0$. If yes, Move on. Get current timestamp $T_2$, $SID_j$ using $PUB_{S_j}$ and computes $Temp_1$ = H($PUB_{S_j}||H(SID_j||K_s||PUB_{S_j})||PUB_{GW_k}||T_2$). The $GW$ sends \textit{Message 2} = \{$Temp_1$,$T_2$\} to $S_j$.
    \item \textbf{\textit{$S_j$ $\xrightarrow{Message 3}$ $GW$:}} The sensing device $S_j$ gets current timestamp $T_2*$ and verifies $\Delta$T $\leq$ $T_2*$ - $T_2$. Verifies H($PUB_{S_j}||Y_2||PUB_{GW_k}||T_2$)$\stackrel{?}{=}$ $Temp_1$. After successful verification, $S_j$ gets current timestamp $T_3$ and computes $M_2$ = H($Y_2||T_3||PUB_{S_j}$), $M_3$ = $Enc_{PUB_{GW_k}}$($M_2$, $D_j$). The $S_j$ sends \textit{Message 3} = \{$M_3$,$T_3$\} to $GW$.   
    \item \textbf{\textit{$GW$ $\xrightarrow{Message 4}$ $S_j$:}} The gateway device $GW$ gets current timestamp $T_3*$ and verifies $\Delta$T $\leq$ $T_3*$ - $T_3$. Get \{$M_2$, $D_j$\} = $Dec_{RGWN_k}$($M_3$) and verifies H($PUB_{S_j}||T_3||H(SID_j||K_s||PUB_{S_j})$)$\stackrel{?}{=}$ $M_2$. Gets $l_i$ and $l_j$ from $B_i$ and $D_j$ respectively by computing: $B_i*$=H($l_i||K_s||H(UID_i)$) till  $B_i*$ $\stackrel{?}{=}$ $B_i$ satisfies for valid $l_i$ and $D_j*$=H($l_j||K_s||H(SID_j)$) till  $D_j*$ $\stackrel{?}{=}$ $D_j$ satisfies for valid $l_j$. Now the $GW$ verifies if $l_i \leq l_j$, then continues else transmits 0 signal to $U_i$, $S_j$ and abort the connection. The $GW$ generates random number $r_1$ and gets current timestamp $T_4$. The $GW$ computes $M_4$ = H($PUB_{U_i}||T_4||PUB_{GW_k}||r_1||H(PUB_{S_j}||SID_j||K_s)$), $M_5$ = $Enc_{PUB_{S_j}}$($M_4$, $SID_j$,$r_1$). The $GW$ sends \textit{Message 4} = \{$M_5$,$PUB_{U_i}$,$T_4$\} to $S_j$.    
    \item \textbf{\textit{$S_j$ $\xrightarrow{Message 5}$ $GW$:}} The sensing device $S_j$ gets current timestamp $T_4*$ and verifies $\Delta$T $\leq$ $T_4*$ - $T_4$. Gets \{$M_4$,$SID_j$,$r_1$\} = $Dec_{RSN_j}$($M_5$) and verifies H($PUB_{U_i}||T_4||PUB_{GW_k}||r_1||H(Y_2)$) $\stackrel{?}{=}$ $M_4$. if yes, move on. The $S_j$ generates random number $r_2$, gets current timestamp $T_5$ and computes $M_6$ = H($PUB_{S_j}||PUB_{GW_k}||PUB_{U_i}||r_2||T_5$), $M_7$ = H($r_1||M_6||SID_j||PUB_{S_j}||T_5$). The $S_j$ sends \textit{Message 5} = \{$M_6$,$M_7$,$T_5$\} to $GW$.
    \item \textbf{\textit{$GW$ $\xrightarrow{Message 6}$ $U_i$:}} The $GW$ gets current time-stamp $T_5*$ and verifies $\Delta$T $\leq$ $T_5*$ - $T_5$. The $GW$ verifies H($r_1||M_6||SID_j||PUB_{S_j}||T_5$) $\stackrel{?}{=}$ $M_7$. If yes, Move on. The $GW$ generates random number $r_3$ and gets current timestamp $T_6$. The $GW$ computes $M_8$ = H($r_3||M_6||M_7||T_6$), $M_9$ = H($SID_j||PUB_{S_j}||K_s$), $M_{10}$ = H($PUB_{U_i}||UID_i||K_s$), $M_{11}$ = H($PUB_{U_i}||UID_i||PUB_{GW_k}||T_6$), $M_{12}$ = H($PUB_{GW_k}||SID_j||PUB_{S_j}||T_6$), $M_{13}$ = $Enc_{PUB_{U_i}}$ ($M_8$,$M_9$,$M_{11},r_1,r_2$), $M_{14}$ = $Enc_{PUB_{S_j}}$ ($M_8$,$M_{10}$,$M_{12},r_t$). The $GW$ sends \textit{Message 6} = \{$M_{13}$,$T_6$\} to $U_i$.  
    \item \textbf{\textit{$GW$ $\xrightarrow{Message 7}$ $S_j$:}} The $GW$ sends \textit{Message 7} = \{$M_{14}$,$T_6$\} to $S_j$.
    \item The $U_i$ gets current timestamp $T_6*$ and verifies $\Delta$T $\leq$ $T_6*$ - $T_6$. Gets \{$M_8$,$M_9$,$M_{11},r_1,r_2$\} = $Dec_{RU_i}$($M_{13}$), verifies H($PUB_{U_i}||UID_i||PUB_{GW_k}||T_6$) $\stackrel{?}{=}$ $M_{11}$ and computes session key \textbf{SK} = H($M_8||M_9||T_6||X_2||r_1||r_2||r_t$).
    \item The $S_j$ gets current timestamp $T_6*$ and verifies $\Delta$T 
    $\leq$ $T_6*$ - $T_6$. Gets \{$M_8$,$M_{10}$,$M_{12},r_t$\} = $Dec_{RU_i}$($M_{14}$), verifies H($PUB_{S_j}||SID_j||PUB_{GW_k}||T_6$) $\stackrel{?}{=}$ $M_{12}$ and computes session key \textbf{SK} = H($M_8||M_{10}||T_6||Y_2||r_1||r_2||r_t$).
\end{enumerate}

\section{SECURITY ANALYSIS}
\label{Sec:SecurityAnalysis}
\noindent In this section, we provide the security analysis for the proposed LDA-2IoT. Security comparison of the proposed scheme with existing schemes shown in Table \ref{tab:securitycomparison}.
\subsection{Informal Security Analysis using Dolev-Yao Channel}
\label{Subsec:InformalAnalysis}
\noindent The Dolev-Yao channel \cite{Dolev1981} is a communication model based on \textit{snd} and \textit{rcv} operations. In this subsection, we set forth the informal security analysis for the proposed protocol based on a Dolev-Yao channel. The polynomial time adversary $\mathcal{A}$ can access and control the Dolev-Yao channel. In the proposed scheme, we consider that the initialize phase implemented over the secure channel, and the gateway device is a trusted secure device. In this subsection, we discuss how the proposed system provides security against the most well-known attacks. 
\subsubsection{Anonymity and Tracebility}
The anonymity for the security algorithm assures that an identity of the user is secured against the adversary's knowledge. In the initialize phase of the proposed scheme, the trusted $GW$ generates an identity of the $i^th$ user as $UID_i$ and $j^th$ sensing device as $SID_j$. Later on $GW$ computes $X_1$ = H($RU_i||UID_i||K_s$) and $X_2$ = H($UID_i|| PUB_{U_i}||K_s$) for each $U_i$. During the login and key-exchange phase, user communicates message $M_1$ = $Enc_{PUB_{GW_k}}$(H($X_2||T_1$), $r_t$, $Pub_{U_i}$, $PUB_{S_j}$, $B_i$) which is secured through the public-key of gateway. Now, let us assume that an adversary $\mathcal{A}$ intercepts other messages $Temp_1$ = H($PUB_{S_j}||SID_j||PUB_{GW_k}||T_2$), $M_3$ = $Enc_{PUB_{GW_k}}$($M_2$, $D_j$),  $M_5$ = $Enc_{PUB_{S_j}}$($M_4$,$r_1$), $M_6$ = H($PUB_{S_j}||PUB_{GW_k}||PUB_{U_i}||r_2||T_5$), $M_7$ = H($r_1||M_6||SID_j||PUB_{S_j}||T_5$), $M_{13}$ = $Enc_{PUB_{U_i}}$ ($M_8$,$M_9$,$M_{11}$). All the intercepted messages are either protected through the one-way hash function H(.) or the encryption. Thus, no vulnerability exists which helps an adversary $\mathcal{A}$ to achieve the $UID_i$. In many realtime application, it is expected that an adversary $\mathcal{A}$ must not be able to trace the user and messages communicated by him/her. The $\mathcal{A}$ can trace $U_i$ if and only if an identity of the $U_i$ is revealed. Thus, the proposed LDA-2IoT scheme achieves anonymity and tracebility. 
\subsubsection{Achieves Mutual Authentication and Session Key Agreement}
The mutual authentication property assures each party that the message is received from the valid source. After receiving of the first message from $U_i$, the $GW$ device retrieves it's identity and performs verification of H(H($PUB_{U_i}||UID_i||K_s)||T_1||r_t$) $\stackrel{?}{=} $H($H(PUB_{U_i}||UID_i*||K_s)||T_1||r_t$). Any adversary $\mathcal{A}$ uses $PUB_{U_i}$ to prove himself/herself as a valid user, $\mathcal{A}$ does not get success due to presence of parameters like $X_2$ and $K_s$ in the verification which are not available with $\mathcal{A}$. Similarly, verification H($PUB_{S_j}||T_3||H(SID_j||K_s||PUB_{S_j})$) $\stackrel{?}{=}$ $M_2$ assure about the authenticity of $S_j$ to the $GW$. The verification H($PUB_{S_j}||SID_j||PUB_{GW_k}||T_2$) $\stackrel{?}{=}$ $Temp_1$ helps sensing device $S_j$ to authenticate the $GW$ and the verification H($PUB_{U_i}||UID_i||PUB_{GW_k}||T_6$) $\stackrel{?}{=}$ $M_{11}$ helps $U_i$ to authenticate the $GW$. The computed session key SK = H($M_8||M_9||T_6||X_2||r_1||r_2||r_t$) also includes identities of each entity in indirect manner thus the proposed LDA-2IoT protocol achieves mutual authentication and session key agreement.
\subsubsection{Secure against Replay Attack}
In the replay attack, an adversary $\mathcal{A}$ replays previously communicated messages after some time or in the next session. To provide security against the replay attack, we use random parameters and timestamps in the proposed scheme. Each communicated message contains time-stamp $T_i$ which is validated by the receiving entity through $\Delta$T $\leq$ $T_i*$ - $T_i$ verification where $T_i*$ is the current time at receiver side and $\Delta$T predefined maximum threshold time. Even though $\mathcal{A}$ replays any message, the LDA-2IoT receiver will catch that the received message is replayed. Thus, the proposed LDA-2IoT scheme is secure against the replay attack.
\subsubsection{Secure against User/Sensor Level Side Channel Attack}
In this paper, we propose a LDA-2IoT which reduces numerous user registrations and achieves hierarchical security. The level $l_i$ defines the level of user $U_i$ in the hierarchy and $l_j$ defines the level of sensing device $S_j$ in the deployment. If an adversary $\mathcal{A}$ gets the level $l_i$ then he/she can guess the role of $U_i$ in the organization, similarly if $l_j$ is available to the $\mathcal{A}$ then he/she can guess the sensing device deployment location. Thus, it is important to secure $l_i$ and $l_j$. In the proposed LDA-2IoT, none of the entity (not even $GW$) stores $l_i$ and $l_j$. The $U_i$ stores $l_i$ in parameter $B_i$ = H($l_i || K_s || H(PUB_{U_i}))$ and the $S_j$ stores $l_j$ in parameter $D_j$ = H($l_j || K_s || H(PUB_{RSN_j}))$ which are protected by one-way hash function and the gateway master $K_s$. Hence, the proposed LDA-2IoT scheme is secured against a level side channel attack.   
\subsubsection{Key Establishment with Perfect Forward Secrecy}
In perfect forward secrecy, we assume that the adversary $\mathcal{A}$ somehow obtains the user secret key $RU_i$ and sensing device secret key $RSN_j$, then the adversary $\mathcal{A}$ can retrieve the \{$M_4$,$r_1$\} from the message $M_5$ through the knowledge of $RSN_j$. The hash function protects the message $M_4$, and $r_1$ is an unknown random number that does not provide any useful information. Similarly, through the $RU_i$, an adversary $\mathcal{A}$ can obtain the \{$M_8$,$M_{10}$,$M_{12}$\}. These all the parameters are secured through the one-way hash function and do not provide any useful information. The session key computed using SK = H($M_8||M_9||T_6||X_2||r_1||r_2||r_t$) where parameter $X_2$ is not available to $\mathcal{A}$ and $\mathcal{A}$ can not get $X_2$ or $Y_2$ by just knowing a $RU_i$ and $RSN_j$. We assume that the user device's physical capturing and the user's secret key relieve will not coincide, and this assumption is valid because one is a physical attack while the other is a guessing attack. Thus the proposed LDA-2IoT scheme achieves the perfect forward secrecy.  
\subsubsection{Gateway Device Bypass Attack} In the gateway device bypass attack, an adversary $\mathcal{A}$ tries to behave as a $GW$ or any one of the device $U_i$ or $S_j$ try to behave as a $GW$. In the proposed scheme, during the initialize phase, the $GW$ computes $X_1$ = H($RU_i || UID_i || K_s$), $X_2$ = H($UID_i|| PUB_{U_i}||K_s$) and $B_1$ = H($l_i || K_s || H(PUB_{U_i}))$ for $U_i$ while $Y_1$ = H($RSN_j || SID_j || K_s$), $Y_2$ = H($SID_j || PUB_{S_j} || K_s$), $D_j$ = H($l_j || K_s || H(PUB_{RSN_j}))$ for $S_j$. All these computation involves gateway master secret $K_s$. Thus, neither $\mathcal{A}$ nor the $U_i$ or $S_j$ can compute the above parameters. Hence, the proposed LDA-2IoT scheme is secured against gateway device bypass attack.

\subsubsection{Stolen User Device Attack} In this attack, an adversary $\mathcal{A}$ gets physical user device and retrieves stored parameters \{$X_1$,$X_2$,$B_1$,$RU_i$\}. Now the session key is computed as SK = H($M_8||M_9||T_6||X_2||r_1||r_2||r_t$) where $r_1$, $r_2$ and $r_t$ are the random parameters. If an adversary $\mathcal{A}$ gets the user secret $RU_i$ then also he can not guess random $r_t$. By capturing the user device, an adversary $\mathcal{A}$ can not capture the user identity also. Thus, it is computationally nonfeasible for an adversary to compute the session key $SK$ in polynomial time. 

\subsubsection{Sensing Device Capture Attack} In this attack, an adversary $\mathcal{A}$ gets the physical user device and gets stored parameters \{$Y_1$,$Y_2$,$D_1$,$RSN_j$\}. Now if an adversary $\mathcal{A}$ tries to compute the session key SK = H($M_8||M_{10}||T_6||Y_2||r_1||r_2||r_t$) then it requires three random numbers $r_1$,$r_2$ and $r_t$ as well as the timestamp $T_6$. Thus, even though an adversary $\mathcal{A}$ physically attacks the sensing device as well as track the messages, he can not obtain the $r_1$ and $r_t$ from it. The sensing device does not store the sensing device identity; thus, through the sensing device attack, $\mathcal{A}$ can not track the sensing device also; thus the proposed scheme is secured against the sensing device capture attack.

\subsubsection{User Device Impersonation Attack} In this attack, an adversary $\mathcal{A}$ intercepts all the messages send by user $U_i$ and tries to replace those messages by other manually generated messages. Let $\mathcal{A}$ intercepts \textit{Message 1} =  \{$M_1$,$Temp_0$ $T_1$\}. The $M_1$ is secured through an encryption while $Temp_0$ is secured through the hash function, still let us assume that $\mathcal{A}$ creates \textit{Message 1*} =  \{$M_1*$,$Temp_0*$ $T_1$\} and forwards it to gateway. Now the gateway device $GW$ extracts data from the message $M_1$* and performs the verification H($H(PUB_{U_i}||UID_i*||K_s)||T_1||r_t$) $\stackrel{?}{=}$ $Temp_0$ which contains the fresh random number generated by $U_i$ and the gateway master secret $K_s$ which is computationally unfeasible to generate same $K_s*$ = $K_s$ in a polynomial time for an adversary $\mathcal{A}$.

\begin{table}[H]
    \begin{threeparttable}
     \caption{Security Comparison}
    \centering
\begin{tabular}{|p{11em}|p{0.7em}|p{0.7em}|p{0.7em}|p{0.7em}|p{0.7em}|p{0.7em}|p{0.7em}|p{0.7em}|p{0.7em}|p{0.9em}|p{0.9em}|} \hline
    \textit{\textbf{\tiny{Scheme}}}  & \textit{\textbf{$S_1$}} & \textit{\textbf{$S_2$}} & \textit{\textbf{$S_3$}} & \textit{\textbf{$S_4$}} & \textit{\textbf{$S_5$}} & \textit{\textbf{$S_6$}} & \textit{\textbf{$S_7$}} & \textit{\textbf{$S_8$}} & \textit{\textbf{$S_9$}} & \textit{\textbf{$S_{10}$}} & \textit{\textbf{$S_{11}$}}  \\ \hline
    \cite{farash2016efficient} & \cmark & \cmark & \cmark & \xmark & \cmark & \cmark & \cmark & \xmark & \cmark & \cmark & \xmark \\ \hline
    \cite{wazid2018} & \cmark & \cmark & \cmark & \xmark & \cmark & \cmark & \xmark & \cmark & \cmark & \xmark & \xmark \\ \hline
    \cite{Zhou2019} & \cmark & \cmark & \cmark & \xmark & \cmark & \cmark & \cmark & \cmark & \xmark & \cmark & \xmark \\ \hline
    \cite{Shin2020} & \cmark & \xmark & \cmark & \cmark & \cmark & \xmark & \cmark & \cmark & \xmark & \cmark & \xmark \\ \hline
    \cite{Jangirala2020} & \cmark & \cmark & \xmark & \cmark & \cmark & \cmark & \xmark & \cmark & \cmark & \xmark & \xmark \\ \hline
    \textbf{\tiny{LDA-2IoT}} & \cmark & \cmark & \cmark & \cmark & \cmark & \cmark & \cmark & \cmark & \cmark & \cmark & \cmark \\ \hline
    \end{tabular}
    \begin{tablenotes}
      \small
      \item \textit{\textbf{Legends:}} $S_1$: Tracebility, $S_2$: Anonymity, $S_3$: Mutual authentication and Integrity, $S_4$: Replay attack , $S_5$: Man-in-The-Middle Attack, $S_6$: Forward secrecy, $S_7$: Gateway by pass attack, $S_8$: Gateway impersonation attack , $S_9$: Sensing device capture attack, $S_{10}$: Privilege insider attack, $S_{11}$: Level Dependent Authentication, \cmark: the protocol supports this feature, \xmark: the protocol doesn't support this feature.
    \end{tablenotes}
    \label{tab:securitycomparison}
    \end{threeparttable}
\end{table}

\subsubsection{Sensing Device Impersonation Attack} In this attack, an adversary $\mathcal{A}$ intercepts all the messages send by the sensing device $S_j$ to the gateway device $GW$. Let $\mathcal{A}$ intercepts \textit{Message 3} = \{$M_3$,$T_3$\} and \textit{Message 5} = \{$M_6$,$M_7$,$T_5$\} and try to replace these messages by \textit{Message 3*} = \{$M_3*$,$T_3$\} and \textit{Message 5*} = \{$M_6*$,$M_7*$,$T_5$\}. The message $M_3$ = $Enc_{PUB_{GW_k}}$($M_2$, $D_j$) is encrypted through the public-key of $GW$. At the other side after receiving this message, $GW$ performs H($PUB_{S_j}||T_3||H(SID_j||K_s||PUB_{S_j})$) $\stackrel{?}{=}$ $M_2$ which includes secure sensor identity $SID_j$ and master secret $K_s$. Thus, it is infeasible to generate a \textit{Message 3*} which is similar to \textit{Message 3}. Thus, the proposed LDA-2IoT scheme is secured against a sensing device impersonation attack.

\subsubsection{Gateway Device Impersonation Attack} In this attack, an adversary $\mathcal{A}$  intercepts all the messages send by the gateway device $GW$ and trie to impersonate as a gateway device. Now let an adversary $\mathcal{A}$ captures \textit{Message 2} = \{$Temp_1$,$T_2$\} , \textit{Message 4} = \{$M_5$,$PUB_{U_i}$,$T_4$\} , \textit{Message 6} = \{$M_{13}$,$T_6$\} and generates new messages \textit{Message 2*} = \{$Temp_1*$,$T_2$\} , \textit{Message 4*} = \{$M_5*$,$PUB_{U_i}$,$T_4$\} , \textit{Message 6*} = \{$M_{13}*$,$T_6$\} and forwards \textit{Message 2*} and \textit{Message 4*} to sensing device $S_j$ and \textit{Message 6*} to user device $U_i$. Now the message $Temp_1$ = H($PUB_{S_j}||H(SID_j||K_s||PUB_{S_j})||PUB_{GW_k}||T_2$) includes the gateway master secret $K_s$ and $SID_j$. The $M_5$ = $Enc_{PUB_{S_j}}$($M_4$, $SID_j$,$r_1$) and $M_{13}$ = $Enc_{PUB_{U_i}}$ ($M_8$,$M_9$,$M_{11},r_1,r_2$) are encrypted by the sensing device secret and the user device secret respectively. Thus it is infeasible to get these both the secrets in polynomial time for an adversary $\mathcal{A}$. Hence, the proposed LDA-2IoT scheme is secured against the gateway device impersonation attack.  

\subsection{Formal Security Proof Using Random Oracle}
\label{Subsec:FormalAnalysis}
\noindent In this section, we perform the formal security analysis for the proposed scheme using a widely accepted and proved secure random oracle based model proposed by Abdalla et al. \cite{Abdalla2005}. The authors in \cite{Abdalla2005} proposed the Real-Or-Random (ROR) model, which helps security designers to prove that the proposed scheme achieves polynomial-time security against an adversary $\mathcal{A}$'s advantage of breaking the security. A similar security model is also used in \cite{Roy2018, Das2018}. 
\begin{enumerate}[a.]
    \item  \textit{Random Oracle:} The random oracle defined as a \textit{H(.)} also called as a hash function which takes message \textit{$m_i$} as a input and computes the one-way irreversible output \textit{$r_i$} \cite{wazid2018, Wazid2018secure}. Whenever an adversary $\mathcal{A}$ generates a challenge with \textit{$m_i$}, random oracle challenger $\mathcal{C}$ computes \textit{$r_i$} = \textit{H($m_i$)} and stores it in the list \textit{L} initialized with NULL  value as a pair of \textit{($m_i$, $r_i$)}. 

     \item \textit{Oracle Participants:} There are three participants in the proposed LDA-2IoT scheme, The user $U_i$, the gateway device $GW$, and the sensing device $S_j$. 

     \item \textit{Oracles:} ${\chi^{p}_{U_i}}$, ${\chi^{q}_{GW}}$, and ${\chi^{r}_{S_j}}$ are oracles with the instances $p$, $q$ and $r$  for the $U_i$, $GW$ and $S_j$ respectively, which are also called as a participants for the protocol $LDA-P$.  

     \item \textit{Oracle Freshness} If using the reveal query $\mathcal{R}$($\chi^{x}$), an adversary $\mathcal{A}$ does not get success in receiving original session key $\mathcal{SK}$ then the oracles,  ${\chi^{p}_{U_i}}$, ${\chi^{q}_{GW}}$, and ${\chi^{r}_{S_j}}$ are considered as a fresh oracles.

     \item \textit{Oracles Partnering:} Oracle instances ${\chi^{x}}$ and ${\chi^{y}}$ are called partner oracles if and only if they fulfill the following criteria simultaneously:
        \begin{itemize}
            \item Both instances ${\chi^{x}}$ and ${\chi^{y}}$ are in the acceptance state.
            \item Both ${\chi^{x}}$ and ${\chi^{y}}$ share the common session id $sid$ and achieve the mutual authentication. "$sid$" is transcript of all the communicated messages between oracles.
            \item Both ${\chi^{x}}$ and ${\chi^{y}}$ satisfy the partner identification and vice-versa.
            \item No instance other than ${\chi^{x}}$ and ${\chi^{y}}$ accept with the partner identification equal to ${\chi^{x}}$ and ${\chi^{y}}$.
        \end{itemize}

    \item \textit{Adversary:} Let us assume that an adversary $\mathcal{A}$ is an eavesdropper who controls the complete communication channel defined over the Dolev-Yao model \cite{Dolev1981}. An adversary $\mathcal{A}$ can read, modify, inject, or fabricate the messages on the communication channel for the proposed network model. An adversary $\mathcal{A}$ has access for the following random oracle queries, which gives numerous capabilities to $\mathcal{A}$ for capturing and modifying the communicated messages and data.
    \begin{enumerate}[1.]
        \item \textit{$\mathcal{R}$ ($\chi^{x}$)} The \textit{Reveal} query $\mathcal{R}$ provides current session key \textit{SK} to the adversary $\mathcal{A}$ which is created by oracle instance $\chi^{x}$ and it's partnering instance.
        \item \textit{$\mathcal{E}$ ($\chi^{x}$,$\chi^{y}$)} The \textit{Execute} query is formed as a passive attack on the communication between oracle participants $\chi^{x}$ and $\chi^{y}$. This query provides all communicated messages to the adversary $\mathcal{A}$.
        \item \textit{$\mathcal{S}$ ($\chi^{x}$,$m_i$)} The \text{Send} query is formed as an active attack performed by $\mathcal{A}$ on instance $\chi^{x}$ where $\chi^{x}$ can receive the message $m_i$ as well as send the reply as a message $m_i$ to $\mathcal{A}$.
        \item \textit{CorruptUserDevice($\chi^{x}$)} The \textit{CorruptUserDevice} query models that the user $U_i$'s  device is available with $\mathcal{A}$ and $\mathcal{A}$ can capture all the data stored in it.
        \item \textit{CorruptSensingDevice ($\chi^{y}$)} The \textit{CorruptSensingDevice} query models that the sensing device $S_j$ is available with $\mathcal{A}$ and $\mathcal{A}$ can capture all the data stored in it using power analysis or reverse engineering attack \cite{Messerges1999, Kocher1999}.
        \item \textit{CorruptUserLevel ($\chi^{x}$)} The \textit{CorruptUserLevel} query models that the level of user $U_i$ is available with an adversary $\mathcal{A}$.
        \item \textit{CorruptSensingLevel ($\chi^{y}$)} The \textit{CorruptSensingLevel} query models that the level of sensing device $S_j$ is available with an adversary $\mathcal{A}$.
        \item \textit{$\mathcal{T}$ ($\chi^{x}$)} Before starting of this oracle game, an unbiased coin b get tossed. The output of this toss decides the return value for the \textit{Test} query $\mathcal{T}$. If the recently generated session key between the user $U_i$ and the sensing device $S_j$ is $SK$ and an adversary $\mathcal{A}$ performs the test query on an instance $\chi^{p}$ which is the instance of $U_i$ or its partner instance $\chi^{r}$ which is an instance of $S_j$ than if the toss output is b = 1 than the participant instance $\chi^{x}$ returns an original session key. In contrast, if the output is b = 0, then the $\chi^{x}$ returns a random value of the session key $SK$'s size to an adversary $\mathcal{A}$. If none of the condition matches, then an instance $\chi^{x}$ returns NULL. The semantic security of the session key is designed based on the \textit{Test} query.
        \end{enumerate}
        
      \item \textit{Session key symmetric security:} The semantic security of the session key $SK$ generated between the user $U_i$ and the sensing device $S_j$ depends on an adversary $\mathcal{A}'s$ capability of indistinguishability between the actual session key and the random number. The output of a test query $\mathcal{T}$ depends on the value of \textit{b'} guessed by an adversary $\mathcal{A}$. If the value of \textit{b'} is similar to the value of \textit{b} which is a hidden bit set by an oracle instance $\chi^{x}$ and used by $\mathcal T(\chi^{x})$ to retrieve the original session key. Overall, the game depends on the correct guess by $\mathcal{A}$ for the bit \textit{b}. If an adversary guesses the correct value of b, then it gets the correct session key. 

    Let $\mathcal{SC}$ define the position in which an adversary gets the success in this game. The advantage of an adversary $\mathcal{A}$ in capturing the correct session key $SK$ for the proposed protocol $LDA_P$ is defined as a \textit{$Adv^{LDA}_p$}. \textit{$Adv^{LDA}_p$} represents the success of an adversary, and if the \textit{$Adv^{LDA}_p$} is negligible, then we can say that the proposed scheme is secured under the ROR model. Thus, we can define \textit{$Adv^{LDA}_p$} as \textit{$Adv^{LDA}_p$}($\mathcal{A}$) = 2*Pr[$\mathcal{SC}$] - 1 which is similar to \textit{$Adv^{LDA}_p$}($\mathcal{A}$) = 2*Pr[b'= b] - 1. Where Pr[$\mathcal{SC}$] represents the probability for the success of an adversary $\mathcal{A}$. If we can prove that the \textit{$Adv^{LDA}_p$} is negligible under the proposed scheme \textit{$LDA_P$}, then we can say that the proposed scheme is secure.
    
    \textbf{Semantic Security for the Password based protocol:} The semantic security for the password based protocol \textit{$LDA-P_{pw}$} defines an adversary $\mathcal{A}$'s capability of guessing the correct password. A password based protocol \textit{$LDA-P_{pw}$} is semantically secure if the advantage function \textit{$Adv_{LDA-P_{pw}}$} is negligible under the condition: \textit{$Adv_{LDA-P_{pw},|\mathcal{D}|}$}($\mathcal{A}$) $\geq$ max($q_s$, ($\frac{1}{|\mathcal{DS}|}$, $\rho_{fp}$)). In this equation, $q_s$ represents the number of send queries($\mathbfcal{S}$), $|\mathcal{DS}|$ shows the finite size of the password dictionary, $\rho_{fp}$ shows probability of the false positive occurrence by an adversary $\mathcal{A}$ \cite{wazid2018, Wazid2018secure}.
\end{enumerate}    
\textit{Security Proof:} We use security model discussed above for to prove formal security of the proposed scheme. Authors in \cite{Wazid2018secure, wazid2018, Das2018} also provided formal proof using random oracle for their schemes.  
\begin{thm}
If $\mathcal{A}$ is a polynomial time attacker running against the proposed protocol $LDA-P$ within a limited time $t$. Let $q_h$ determines the range space of hash ($\mathcal{H}$) queries, $q_s$ denotes the number of send ($\mathcal{S}$) queries, $q_e$ represents the number of execute ($\mathcal{E}$) query, the uniformly distributed password dictionary is defined as \textit{DC} either against the user $U_i$ or the sensing device $S_j$ and $Adv_{\rho}^{ECDLP}$ defines the advantage of $\mathcal{A}$ of breaking the discrete logarithm problem of $\mathcal{A}$ then we can say that the proposed protocol is secured if, 
\begin{equation}
    \label{eq2}    
   \begin{split}
    \textit{$Adv^{LDA}_p$}(\mathcal{A}) \leq  \frac{q^2_{h}}{2^{l_h}} + max(q_s, (\frac{1}{|\mathcal{DC}|}, \rho_{fp})) \\ + Adv_{\rho}^{ECDLP} + (\frac{1}{2^{l_j}})
    \end{split}
\end{equation}
In equation \ref{eq2}, $l_h$ is the size of the return value of a hash ($\mathcal{H}$) query generated by an adversary $\mathcal{A}$ in bits, $l_r$ is the size of the random nonce generated by the protocol $LDA-P$. $|\mathcal{DC}|$ shows the finite size of a password dictionary, and $\rho_{fp}$ shows the probability of a false positive occurrence by $\mathcal{A}$. 
\end{thm}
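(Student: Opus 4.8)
The plan is to prove the bound by a sequence of games (game hopping), following the Real-Or-Random methodology of Abdalla et al.\ \cite{Abdalla2005} and using the oracles $\mathcal{R},\mathcal{E},\mathcal{S},\mathcal{T}$ together with the hash list $L$ defined above. I would introduce games $\mathbf{G}_0$ through $\mathbf{G}_4$ and let $\mathcal{SC}_i$ denote the event that $\mathcal{A}$ correctly guesses the hidden bit $b$ in game $\mathbf{G}_i$, so that by definition $Adv^{LDA}_p(\mathcal{A}) = 2\,\Pr[\mathcal{SC}_0]-1$. Game $\mathbf{G}_0$ is the real attack against $LDA\text{-}P$. Game $\mathbf{G}_1$ adds a faithful simulation of every oracle, in particular maintaining $L$ as pairs $(m_i,r_i)$; since this is only bookkeeping that is perfectly consistent from $\mathcal{A}$'s view, $\Pr[\mathcal{SC}_1]=\Pr[\mathcal{SC}_0]$.

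First I would eliminate collisions in game $\mathbf{G}_2$: the challenger aborts if two distinct $\mathcal{H}$ queries return the same digest, or if two sampled nonces among $r_t,r_1,r_2$ coincide and make two transcripts collide. By the birthday bound the digest collisions cost at most $\frac{q_h^2}{2^{l_h}}$, while the nonce collisions account for the final summand, giving $|\Pr[\mathcal{SC}_2]-\Pr[\mathcal{SC}_1]| \le \frac{q_h^2}{2^{l_h}} + \frac{1}{2^{l_j}}$. Next, in game $\mathbf{G}_3$ I would treat the active $\mathcal{S}$ (Send) queries modelling impersonation: the only way for $\mathcal{A}$ to make an honest instance accept a forged value such as $Temp_0 = H(X_2\|T_1\|r_t)$ or $M_2 = H(Y_2\|T_3\|PUB_{S_j})$ is either to guess a password from the dictionary, bounded by $\max\!\bigl(q_s,(\frac{1}{|\mathcal{DC}|},\rho_{fp})\bigr)$ via the password semantic-security argument already stated, or to reproduce a digest involving the gateway master secret $K_s$, which $\mathbf{G}_2$ has excluded. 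Hence $|\Pr[\mathcal{SC}_3]-\Pr[\mathcal{SC}_2]|$ is bounded by this online-guessing term.

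The core of the argument is game $\mathbf{G}_4$, where I would show that without solving the underlying elliptic-curve problem the session key $SK = H(M_8\|M_9\|T_6\|X_2\|r_1\|r_2\|r_t)$ is information-theoretically hidden from $\mathcal{A}$. I would construct a reduction $\mathcal{B}$ that embeds an ECDLP (equivalently, ECC-decryption) challenge into the public keys $PUB_{GW_k},PUB_{U_i},PUB_{S_j}$ and answers all of $\mathcal{A}$'s queries consistently; because the nonces $r_1,r_2,r_t$ travel only inside the ciphertexts $M_1,M_3,M_5,M_{13},M_{14}$, any $\mathcal{A}$ that distinguishes the real $SK$ from a random string must, through the corresponding $\mathcal{H}$ query, have recovered one of these nonces from a ciphertext and thereby solved the embedded instance. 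This yields $|\Pr[\mathcal{SC}_4]-\Pr[\mathcal{SC}_3]| \le Adv_{\rho}^{ECDLP}$, and in $\mathbf{G}_4$ the Test response is a fresh random value independent of $b$, so $\Pr[\mathcal{SC}_4]=\frac{1}{2}$. Collecting the four game differences and recalling $Adv^{LDA}_p(\mathcal{A}) = 2\bigl(\Pr[\mathcal{SC}_0]-\frac{1}{2}\bigr)$ then gives exactly the claimed inequality \eqref{eq2}.

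I expect the ECDLP reduction in game $\mathbf{G}_4$ to be the main obstacle: the key mixes the hashed long-term secrets $X_2$ (on the user side) and $Y_2$ (on the sensor side) with three separately encrypted nonces, so $\mathcal{B}$ must plant the challenge without knowing $K_s$ and still answer $\mathcal{R}$, the Corrupt queries, and $\mathcal{H}$ in a way that respects the freshness and partnering conditions defined above. Making the embedding succeed for the user instance $\chi^{p}_{U_i}$ and the sensor instance $\chi^{r}_{S_j}$ simultaneously, and guaranteeing that a distinguishing $\mathcal{A}$ is forced to query $\mathcal{H}$ on an input exposing the discrete logarithm, is the delicate part; by contrast the collision step $\mathbf{G}_2$ and the dictionary step $\mathbf{G}_3$ are routine.
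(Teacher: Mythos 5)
Your game-hopping skeleton matches the paper's almost step for step: the paper also runs five games $Gm_0$--$Gm_4$ with $\Pr[\mathcal{SC}_0]=\Pr[\mathcal{SC}_1]$, a birthday bound of $\frac{q_h^2}{2^{l_h}}$ for the hash-digest step, the term $\max(q_s,(\frac{1}{|\mathcal{DC}|},\rho_{fp}))$ for the dictionary/Send step, $\Pr[\mathcal{SC}_4]=\frac{1}{2}$, and the same final collection via the triangle inequality. Your $G_4$ reduction sketch is in fact more explicit than the paper's, which merely asserts that recovering $X_2$ or $Y_2$ requires the device secret keys and invokes $Adv_{\rho}^{ECDLP}$ without constructing a simulator.

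There is, however, one genuine gap: you misattribute the $\frac{1}{2^{l_j}}$ term. You fold it into $G_2$ as a nonce-collision bound among $r_t,r_1,r_2$, but in the paper this term has nothing to do with nonces. The paper's $Gm_4$ handles the \textit{CorruptUserLevel} and \textit{CorruptSensingLevel} oracles, and $\frac{1}{2^{l_j}}$ is defined there as the probability that $\mathcal{A}$ guesses the correct hierarchy level hidden inside $B_i=H(l_i\|K_s\|UID_i)$ and $D_j=H(l_j\|K_s\|SID_j)$, with $2^{l_j}$ the number of bits used to encode a level; the $Adv_{\rho}^{ECDLP}$ term is bundled into that same game as the cost of validating the guess. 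Your proposal never exercises the level-corruption oracles at all, so it leaves the adversarial capability that is the whole point of the ``level-dependent'' construction unaccounted for, and the bound you derive would cover a strictly weaker adversary than the one the security model grants. (The theorem statement invites this confusion by defining an unused quantity $l_r$ for the nonce length, but the intended reading is the level-guessing one.) To repair your proof you would keep $G_2$ as a pure hash-collision game, and add the level-corruption queries as a separate hop contributing $\frac{1}{2^{l_j}}+Adv_{\rho}^{ECDLP}$, which is exactly the paper's $Gm_4$.
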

\begin{proof}
The proposed protocol is secured if the $Adv^{LDA}_p$($\mathcal{A}$) is negligible using the ROR model. Similar proof is also discussed in \cite{Wazid2018secure, wazid2018, Das2018}. We define five games, say \textit{$Gm_0$}, to \textit{$Gm_4$} to prove the security of the proposed scheme. Now, let us define an event \textit{$SC_i$} which represents the correct guess for the bit $b$ in each game \textit{$Gm_i$} via the test query $\mathcal{T}$ by an adversary $\mathcal{A}$. 
\end{proof}

\noindent \textit{$Gm_0$:} The first game $Gm_0$ is the original security game which is corresponding to an original attack performed by an adversary $\mathcal{A}$ on the \textit{LDA-P}. At the beginning of the game, adversary $\mathcal{A}$ chooses bit $b$. Hence it follows that, 
\begin{equation}
    \label{eq3}
    \textit{$Adv^{LDA}_p$}(\mathcal{A}) = 2*Pr[\mathcal{SC} _0] - 1.
\end{equation}

\noindent \textit{$Gm_1$:} The $Gm_1$ is modelled as a passive attack in which $\mathcal{A}$ performs execute query \textit{$\mathcal{E}$(${\chi^{p}_{U_i}}$, ${\chi^{q}_{GW}}$, ${\chi^{r}_{S_j}}$)} and captures all communicated messages (\textit{Message 1} to \textit{Message 7}). Based on all these messages $\mathcal{A}$ tries to determine the session key $SK$ and after completion of the game $\mathcal{A}$ performs a test query $\mathcal{T}$. The output of $\mathcal{T}$ determines weather it is  veritable session key or the random number. The session key is computed by the user $U_i$ and the sensing device $S_j$ as \textbf{SK} = H($M_8||M_9||T_6||X_2||r_1||r_2||r_t$) and \textbf{SK} = H($M_8||M_{10}||T_6||Y_2||r_1||r_2||r_t$) respectively. The session key computation involves $M_8$, $M_9$, $M_{10}$ and random numbers which are secured through the $RU_i$ and $RSN_j$. Since, interception of the messages \textit{Message 1} to \textit{Message 7} does not lead to compromise of the session key $SK$ or any other secret credentials. Thus, the winning probability of the adversary $\mathcal{A}$ does not increase in $Gm_1$.

\begin{equation}
    \label{eq4}
    Pr[\mathcal{SC} _0] =  Pr[\mathcal{SC} _1]. 
\end{equation}

\noindent \textit{$Gm_2$:} The $Gm_2$ involves two more queries in the $Gm_1$. The $Gm_2$ executes \textit{Send} query and Hash \textit{H(.)} through which an adversary $\mathcal{A}$ communicates with the user $U_i$ and the sensor $S_j$. Through the several \textit{H(.)} queries, $\mathcal{A}$ verifies hash digest. Thus, $Gm_2$ is an active attack in which $\mathcal{A}$ tries to convince the $U_i$ and $S_j$ to accept the forged messages. The messages $M_8$ = H($r_3||M_6||M_7||T_6$), $M_9$ = H($SID_j||PUB_{S_j}||K_s$), $M_{10}$ = H($PUB_{U_i}||UID_i||K_s$), $M_{11}$ = H($PUB_{U_i}||UID_i||PUB_{GW_k}||T_6$), $M_{12}$ = H($PUB_{GW_k}||SID_j||PUB_{S_j}||T_6$) involves throughout the use of random numbers, time-stamps, sensing device identity, gateway master secret, user identity which will not provide any success to an adversary $\mathcal{A}$ in collusion verification of the generated message digest. Thus, through the birthday paradox, it follows that,

\begin{equation}
    \label{eq5}
    Pr[\mathcal{SC} _1]  -  Pr[\mathcal{SC} _2] \leq  \frac{q^2_{h}}{2^{l_h}}.
\end{equation}

\noindent \textit{$Gm_3$:} The $Gm_3$ translated from $Gm_2$. The $Gm_3$ performs all the \textit{Corrupt} queries. Through the query \textit{CorruptUserDevice}, an adversary $\mathcal{A}$ receives all the stored parameters like $X_1$, $X_2$, $K_1$, $B_1$ and the other curve parameters. Now, $\mathcal{A}$ tries to guess the correct user ID and password PW for the user $U_i$ through the dictionary attack. To guess the correct password, $\mathcal{A}$ needs $TPW$ and $R_a$ to validate the TPW* = H($UPW_i*||R_a*$). The value of $R_a$ is random value and it's correct guess depends on the correct guess for an identity $UID_i$. Thus, due to these limitations for the \textit{Send} query access in a polynomial time, it is infeasible to guess the correct pair of ($UID_i$, $UPW_i$) in a polynomial time. In similar way, Thus we obtain that, 

\begin{equation}
    \label{eq6}
    Pr[\mathcal{SC} _3] - Pr[\mathcal{SC} _2] \leq max(q_s, (\frac{1}{|\mathcal{DS}|}, \rho_{fp}))
\end{equation}

\noindent \textit{$Gm_4$:} The $Gm_4$ is translated from the $Gm_3$. In this game an adversary $\mathcal{A}$ performs \textit{CorruptUserLevel($\chi^{x}$)}, \textit{CorruptSensingLevel($\chi^{y}$)}, \textit{CorruptSensingDevice}. Through these queries, $\mathcal{A}$ tries to get the level of the user device or the sensor device. Now, let us assume that the probability of guessing the correct level is $\frac{1}{2^{l_j}}$ where $2^{l_j}$ represents the number of bits used for the level. Thus, after guessing the level of user device or sensing device, $\mathcal{A}$ tries to validate it's guess. To validate the user level $l_i$, an adversary $\mathcal{A}$ requires $B_i$ = H($l_i || K_s || UID_i)$ and to validate the sensing device level $l_j$, an adversary $\mathcal{A}$ needs $D_j$ = H($l_j || K_s || SID_j)$. To get these parameters, $\mathcal{A}$ must need the secret key of the user device ($RU_i$) or sensing device ($RSN_j$) which is computationally infeasible for an adversary to get in polynomial time as per Definition \ref{Definition:Def-2}. Thus, we have, 
\begin{equation}
    \label{eq7}
    Pr[\mathcal{SC} _4] - Pr[\mathcal{SC} _3] \leq \frac{1}{2^{l_j}} + Adv_{\rho}^{ECDLP}
\end{equation} 

Now, after completion of all the games, $\mathcal{A}$ doesn't get success. Now $\mathcal{A}$ have only one option left in which $\mathcal{A}$ try to guess the correct value of bit "b" and perform the $\mathcal{T}$ query. The success probability of this query is $\frac{1}{2}$. So after all the games, it is clear that, 
\begin{equation}
    \label{eq8}
    Pr[\mathcal{SC} _4] = \frac{1}{2}
\end{equation} 
Now, from equation \ref{eq3}, we get $\frac{1}{2}$*\textit{$Adv_{LDA-P}$} = \textit{[Pr[$SC_0$] - $\frac{1}{2}$]}. So by using the triangular inequality, we can get the following \textit{[Pr[$\mathcal{SC} _ 1]$} - \textit{[Pr[$\mathcal{SC} _ 4]$} $\leq$ \textit{[Pr[$\mathcal{SC} _ 1]$} - \textit{[Pr[$\mathcal{SC} _ 2]$} + \textit{[Pr[$\mathcal{SC} _ 2]$} - \textit{[Pr[$\mathcal{SC} _ 4]$} $\leq$ \textit{[Pr[$\mathcal{SC} _ 1]$} - \textit{[Pr[$\mathcal{SC} _ 2]$} + \textit{[Pr[$\mathcal{SC} _ 2]$} - \textit{[Pr[$\mathcal{SC} _ 3]$} $\leq$ $\frac{q^2_{h}}{2^{l_h}}$ + max($q_s$, ($\frac{1}{|\mathcal{DC}|}$, $\rho_{fp}$)) + ($Adv_{\rho}^{ECDLP}$) + ($\frac{1}{2^{l_j}}$). Using equations \ref{eq6}-\ref{eq8}, 
\begin{equation}
    \label{eq9}    
   \begin{split}
    |Pr[\mathcal{SC} _ 0]-\frac{1}{2}| \leq  \frac{q^2_{h}}{2^{l_h}} + max(q_s, (\frac{1}{|\mathcal{DC}|}, \rho_{fp})) \\ + Adv_{\rho}^{ECDLP} + (\frac{1}{2^{l_j}})
    \end{split}
\end{equation}
So finally, from the equation \ref{eq3} and \ref{eq9}, we can derive,
\begin{equation}
   \label{eq10}
   \begin{split}
    \textit{$Adv_{LDA-P}$}(\mathcal{A}) \leq  \frac{q^2_{h}}{2^{l_h}}  + max(q_s, (\frac{1}{|\mathcal{DC}|}, \rho_{fp})) \\ + Adv_{\rho}^{ECDLP} + (\frac{1}{2^{l_j}})
    \end{split}
\end{equation}
\section{PERFORMANCE ANALYSIS AND COMPARISON}
\label{Sec:performanceanalysis}
\noindent In this section, we examine and collate the performance of the proposed LDA-2IoT based on communication cost, computation cost, energy consumption, round trip delay and the throughput.  
\subsection{Communication Cost}
\label{Subsec:CommunicationCost}
\noindent The communication cost defines the total number of bits transmitted on the public channel. During the implementation of the proposed LDA-2IoT, we used a python-based programming approach. Table \ref{tab:commcost} shows the total number of bits communicated in the cited schemes over the public channel. The computation of the communication cost is done as follows: to compute the communication cost, we brought the output size for each parameter in the unit of "bits" using python. In our implementation, the size of the generated identity and password is 160 bits. We used SHA-256 as a hash function; thus, the size of the hash output is 256 bits. The timestamp size is 32 bits, and the size of the generated random number is 128 bits. Table \ref{tab:commcost} summarizes communication cost comparison between the proposed scheme and other existing scheme. 
\begin{table}[H]
    \centering
    \caption{Communication Costs Comparison in bits}
\begin{tabular}{|p{13em}|p{3em}|p{4em}|p{3em}|p{4.5em}|} \hline
    \textit{\textbf{Scheme}}  & \textit{\textbf{User}} & \textit{\textbf{Gateway}} & \textit{\textbf{Sensor}} & \textit{\textbf{Total Cost}}  \\ \hline
    \cite{farash2016efficient} & 632 & 792 & 2048 & 3472 \\ 
    \cite{wazid2018} & 736 & 1344 & 512 & 2592 \\ 
    \cite{Zhou2019} & 832 & 2048 & 672 & 3552 \\ \cite{Shin2020} & 1158 & 1560 & 678 & 3552 \\
    \cite{Jangirala2020} & 1012 & 1127 & 517 & 2656 \\
    \textit{LDA-2IoT} & 512 & 1344 & 704 & 2560 \\ \hline
    \end{tabular}
    \label{tab:commcost}
\end{table}
\subsection{Computation Cost}
\label{Subsec:ComputationCost}
\noindent The computation cost highlights the number of cryptographic operations used in the proposed scheme during the login and authentication stage. It also gives the total time required by those operations at each participant's devices. Let $T_h$, $T_E$/$T_D$,  $T_{P}$ and $T_{fe}$ represent the computation cost of one-way hash function H(.), ECC encryption/decryption operation, ECC Point multiplication and fuzzy extractor respectively. We do not consider the computation of bitwise XOR operation because it takes very little time (almost 0 ms) compare to other operations. Though we implemented the proposed protocol in realtime, we use the self-observations to compute the computation cost for the proposed LDA scheme and other existing schemes. During our implementation, we observe that,
\begin{itemize}
    \item For user device, $T_E$/$T_D$ operation takes 0.07083 seconds, $T_h$ operation takes 0.00041 seconds, the $T_{P}$ operation takes 0.0607 seconds and $T_{fe}$ operation takes 0.0503 seconds.
    \item For sensing device, $T_E$/$T_D$ operation takes 0.08883 seconds, $T_h$ operation takes 0.00084 seconds and the $T_{P}$ operation takes 0.0703 seconds.
    \item For gateway device, $T_E$/$T_D$ operation takes 0.06783 seconds, $T_h$ operation takes 0.00034 seconds and the $T_{P}$ operation takes 0.0589 seconds.
\end{itemize}       
Above all the costs are an average of 100 times verified outputs. Table \ref{tab:comp} summarizes computation cost comparison between the proposed scheme and other existing scheme.  
\begin{table}[H]
    \begin{threeparttable}
    \caption{Computation Costs Comparison}
    \centering
    \begin{tabular}{|p{10em}|p{7em}p{5em}p{5em}|p{2em}|} \hline
    \tiny{\textit{\textbf{Scheme}}}  & \tiny{\textit{\textbf{User}}} & \tiny{\textit{\textbf{Gateway}}} & \tiny{\textit{\textbf{Sensor}}} & \tiny{\textit{\textbf{Time(ms)}}}  \\ \hline
    \tiny{\textbf{\cite{farash2016efficient}}}  & \tiny{\textbf{11*$T_h$}} & \tiny{\textbf{14*$T_h$}} & \tiny{\textbf{7*$T_h$}} & \tiny{\textbf{10.4341}} \\ 
    \tiny{\textbf{\cite{wazid2018}}}  & \tiny{\textbf{$T_{fe}$ + 13*$T_h$ + 2*$T_e$}} & \tiny{\textbf{5*$T_h$ + 4*$T_e$}} & \tiny{\textbf{4*$T_h$ + 2*$T_e$}} & \tiny{\textbf{8.99}}\\ 
    \tiny{\textbf{\cite{Zhou2019}}} & \tiny{\textbf{4*$T_{P}$ + 5*$T_h$}} & \tiny{\textbf{3*$T_{P}$ + 7*$T_h$}} & \tiny{\textbf{4*$T_{P}$ + 6*$T_h$}} & \tiny{\textbf{10.693}}\\ 
     \tiny{\textbf{\cite{Shin2020}}} & \tiny{\textbf{3*$T_{P}$ + 14*$T_h$}} & \tiny{\textbf{$T_{P}$ + 12*$T_h$}} & \tiny{\textbf{2*$T_{P}$ + 5*$T_h$}} & \tiny{\textbf{8.66}}\\ 
     \tiny{\textbf{\cite{Jangirala2020}}}  & \tiny{\textbf{5*$T_{P}$ + 13*$T_h$}} & \tiny{\textbf{3*$T_{P}$ + 23*$T_h$}} & \tiny{\textbf{2*$T_{P}$ + 9*$T_h$}} & \tiny{\textbf{22.5}}\\  
    \tiny{\textbf{LDA-2IoT}} & \tiny{\textbf{6*$T_h$ + 2*$T_e$}} & \tiny{\textbf{13*$T_h$ + 6*$T_e$}} & \tiny{\textbf{6*$T_h$ + 3*$T_e$}} & \tiny{\textbf{7.92}}\\ \hline
    \end{tabular}
    \begin{tablenotes}
      \small
      \item \textit{\textbf{Legends:}} $T_h$: One-way hash function cost, $T_E$/$T_D$: ECC Encryption/Decryption cost, $T_{P}$: ECC point multiplication cost
    \end{tablenotes}
    \label{tab:comp}
    \end{threeparttable}
\end{table}

\subsection{Round-trip Delay}
\label{Subsec:RTTAnalysis}
\noindent We computed Round-Trip Delay (RTD) as an average time required by a communicated packet to arrive at the destination from the source \cite{Challa2017}. The round-trip delay involves queuing delay, processing delay, transmission delay, and the propagation delay. The processing delay includes cryptographic operations, while the propagation delay includes travel time required by a packet. For the experimental purpose, through our scenario of numerous users, uni gateway, and multiple sensing devices, we generated simultaneous requests to the gateway device from the user devices for accessing the sensors at different levels. Then, The average RTD at the sensing device, which includes the time between the sensor's reply to the gateway and gateway's response to the sensor, is 0.4825 second. The average RTD at a user device, which includes the time required between sending a request to receiving a reply from the gateway via sensing devices, is 0.5282 seconds. If we send some requests in which the user is not eligible to access the sensor at a particular level, then the RTD gets a little hike due to the gateway node taking little more verification time. If the gateway device does not find a valid user, subsequently, it communicates 0 signal to both the user and sensor to indicate invalid requests received.    
\subsection{Throughput}
\label{Subsec:ThroughputAnalysis}
\noindent We define throughput in either way. The first way is based on the number of bits communicated in unit-time, and the second way is the number of packets transmitted in unit-time. During the implementation of the proposed LDA-2IoT, we gathered data for numerous static users, uni gateway, and numerous static sensors. The throughput is 162 bps, 233 bps, and 91bps at the user, gateway, and sensor. Thus, if we consider the computation cost required for the proposed scheme, then the average transmission time as per the throughput will be 4.28 seconds, 5.16 seconds, and 8.69 seconds needed for the user, gateway, and sensor, respectively. We like to highlight here that we installed "MOSQUITTO" \cite{light2017mosquitto} broker at the gateway for implementation, and we collected the data from the gateway device. Now, if we consider the number of packets transmitted per unit time, then the throughput can be computed as $\frac{total packet received*packet size}{total time}$ \cite{Challa2017}. Thus, by computed using this formula, the average number of MQTT packet received at the user is 7, the sensor is 12 and gateway is 42 where packet size communicated from the user to gateway is 7 byte, gateway to user and sensor is 9 byte and sensor to the gateway is 5 byte through MQTT. Thus, the average throughput for the proposed scheme is 19.48 bps.  
\section{IMPLEMENTATION OF LDA-2IoT}
\label{Sec:Implementation}
\noindent An environment for the implementation of the proposed LDA scheme is highlighted in the following Table \ref{Tab:implementenvt}. We use the laptop and desktop as a user device, the raspberry pi as a gateway device and the NodeMCU connected with the sensors as a sensing device.   
\begin{table}[H]
        \centering
        \caption{Implementation Environment}
        \begin{tabular}{|p{4cm}||p{10cm}|} \hline
        Network Model & Generic IoT Model  \\ \hline
        Broker & Mosquitto \\ \hline
        Protocol  & Using MQTT \\ \hline
        Language & Python \\ \hline
        ECC Curve & NIST P-256 Curve \\ \hline
        Secure channel & By Enabling TLS communication in Mosquitto \\ \hline
        ECC Multiplication & Using double and Add method \\ \hline
        Message format & JSON Type \\ \hline
        User Device & Intel (R) Core (TM) i3-7500 CPU with 2.80 GHz. \\ \hline
        Gateway System & Raspberry Pi 3 Model B, 1 GB RAM. \\ \hline
        Sensing device & NodeMCU + Raspberry Pi \\ \hline
        \end{tabular}
        \label{Tab:implementenvt}
    \end{table}
The following Fig. \ref{Fig:6} shows the computed session key between the user device and the sensing device. 
\begin{figure}[H]
    \centering
    \includegraphics[width=3.2in]{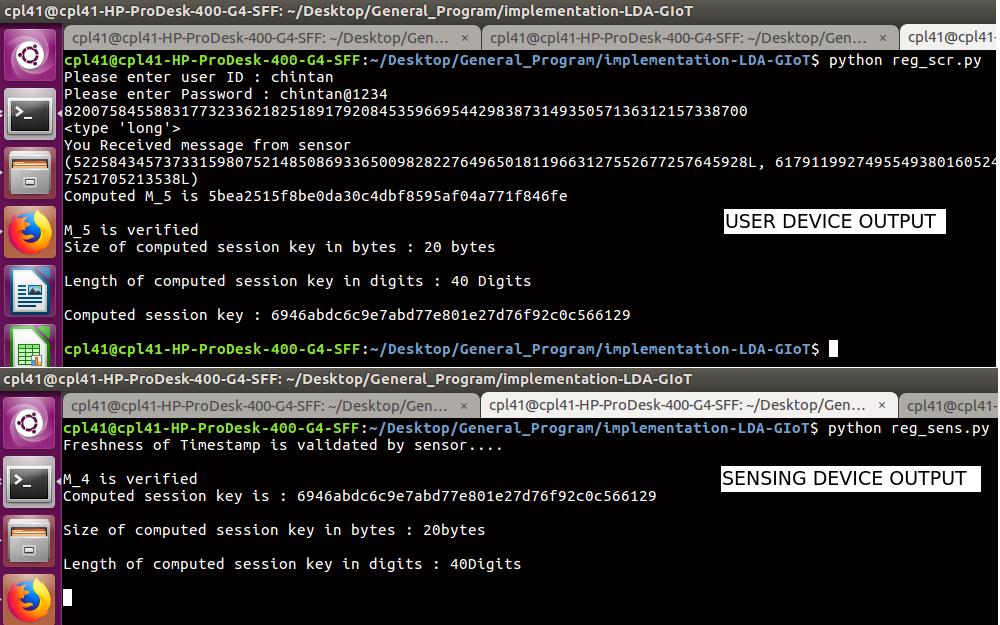}
    \caption{LDA-2IoT Session Key}
    \label{Fig:6}
\end{figure}
\section{CONCLUSION}
\label{Sec:Conclusion}
\noindent  In this paper, we introduced a novel IoT authentication approach using an Elliptic Curve Cryptography. We proposed a Level-Dependent Authentication for Generic IoT (LDA-2IoT). The LDA-2IoT reduces the number of user registrations and smooths the access control mechanism of the IoT system. We provided the informal security analysis of the proposed scheme through the Dolev-Yao channel. The formal security analysis of the proposed scheme is given using a widely accepted AVISPA tool and random oracle based ROR Model. The comparative analysis of the LDA-2IoT with the other existing systems shows a little increase in computation and communication costs in the authentication phase. Still, it drastically decreases the efforts in multiple user registration and maintenance of the access control list. The implementation of a proposed LDA-2IoT is done through the MQTT protocol as an application layer protocol and raspberry-pi as a sensing device. Overall the proposed LDA-2IoT opens the new door for the researchers to study access control free, only authentication dependent security systems. The proposed scheme's future work is to perform the feasibility analysis for the proposed LDA approach in the different IoT applications. Another future work of the proposed approach is to implement proposed LDA with other cryptographic approaches like One Time Password (OTP), Physical Unclonable Function (PUF), digital signature, third party certificates, tokenizations and so on.  
\bibliography{reference}

\end{document}